\newtheorem{theorem}{Theorem}
\newtheorem{corollary}[theorem]{Corollary}
\begin{document}

\title{Born's Rule from Quantum Frequentism}
\author{Lionel Brits}
\affiliation{Department of Physics and Physical Oceanography, Memorial University of Newfoundland, NL A1C 5S7, St. John’s,
Canada}
\orcid{0000-0002-2445-2701}
\email{lbrits@mun.ca}

\maketitle

\begin{abstract}
    \added{Quantum theory has evolved from a set of provisional rules to an indispensable framework that underlies much of modern technology and infrastructure. Yet, after a century, Born's probability postulate remains at odds with the theory's unitary character. The problem stems from the linearity of the Schrödinger equation, as linear systems are insensitive to the magnitudes of their solutions' coefficients. If measurement is unitary, and thus linear, how can the frequency of an outcome depend on the magnitude of its amplitude? And if not, at what scale does unitarity break down?
    This question remains pressing, as the assumption of unitarity underlies both the design of large-scale fault-tolerant quantum devices, as well as our understanding of fundamental aspects of our universe, for example, the black hole information problem.} Proponents of the many-worlds interpretation have argued that Born’s rule is observed because events that violate it have vanishing norms, and so must be unphysical. However, this argument has only been made explicit for special cases involving infinitely many identical states. 
    In this paper we provide a generalized form of Born's rule applicable to measurements of arbitrarily-prepared factorizable states, and prove that such systems contain no histories that violate Born’s rule. \added{Our result therefore demonstrates that purely unitary evolution can co-exist with Born's rule under more relaxed conditions. In addition, we apply our generalized rule to the problem of quantum  circuit fidelity, and provide a single-shot alternative to cross-entropy benchmarking based on self-information rate. }
\end{abstract}

\section{Introduction}

Despite the predictive success of the Copenhagen formulation of quantum mechanics, it is not universally accepted as a complete description of nature~\cite{Schlosshauer_2013}. This is in part because it relies on a distinguished observer in order to make sense of the measurement process~\cite{Bell_1990}. This can either be seen as an inconsistency -- that physical interactions are unitary or non-unitary depending on
whether they are observed~\cite{Wigner1961} -- or merely a nuisance -- that only a single observer is actually necessary to bootstrap the measurement process, one
that can be pushed to the very edges of a system and then be forgotten~\cite{vonNeumann1955}. This issue has, in part, lead to
the development of the many worlds interpretation
(MWI)~\cite{Everett1956, DeWitt1973}, which aims to explain the role of the observer within the unitary framework of the
theory itself. This interpretation explains the apparent collapse of the wavefunction as a loss of coherence between environmental states corresponding to different measurement outcomes, so that local degrees of freedom seem to evolve irreversibly~\cite{Zurek_2003,joos2003decoherence}. However, the MWI suffers from its own minimalism -- having thrown out everything that is discontinuous and non-unitary, it does not seem to give a wholly satisfactory explanation for the origin of probability, and in particular, Born's rule. \added{If the measurement process is completely unitary, and therefore linear in the amplitudes of the wave-function, how can the probability of measuring a certain outcome depend on the magnitude of its amplitude? This question is not purely aesthetic. Efforts to observe quantum effects at mesoscopic scales -- such as in superconducting quantum devices~\cite{AbuGhanem_2025} and optomechanical resonators~\cite{kotler2021direct,bild2023schrodinger} -- as well as to describe more exotic environments, including black hole evaporation~\cite{raju2022lessons} and the early universe, demand a better understanding of the apparent transition between quantum and classical behavior.}

\added{Numerous attempts have been made to reconcile these two regimes, and to obtain Born's rule from unitarity measurement alone.} Some approaches  invoke arguments from symmetry~\cite{Zurek2005,lesovik2021derivation, hossenfelder2021derivation}, while others make use of decision-theoretic methods concentrated on the expectations of rational agents~\cite{Deutch1999, Wallace2010, Sebens_2018}. While compelling, many of these ideas center around a common question -- \textit{What should I expect \emph{if} probabilities were assigned to amplitudes?}~\cite{Gleason1957} -- and do not address the origin of the probability measure itself. In contrast, others argue that it is the behaviour of the wavefunction in the limit \added{of infinitely many indentical subsystems} that leads to the origin of a probability rule~\cite{Hartle1968, Farhi1989}, but efforts at extending this line of reasoning to more natural, non-identical systems have stalled. Our work builds upon the ideas of Hartle~\cite{Hartle1968} and Farhi, Goldstone, and Gutmann~\cite{Farhi1989}, who attempted to show the emergence of Born's rule in the infinite-$N$ limit using   the frequency operator formalism. We instead analyze the infinite-$N$ wavefunction directly, and offer a more general, constructive approach applicable to infinite, arbitrarily-prepared product states, showing how Born's probability rule arises from the structure of the universal wavefunction alone.

\section{The problem with (too) many worlds}

For definiteness, we will summarize the key ideas behind the MWI~\cite{Everett1956,DeWitt1973}. According to this interpretation, the state of any isolated system is at all times described by a state vector $\ket{\Psi}$ evolving unitarily according to the Schr\"odinger equation. In particular, the MWI deduces that if the universe is an isolated system, then it too must be described in this way. 
The key strength of the MWI is that it explains quite clearly how the stochastic nature of quantum mechanics comes about. Suppose that we divide our universe into a microscopic system $S$ being measured and an apparatus $A$ that is performing the measurement (the observer may be considered as part of the apparatus subsystem). Then a factorizable state $\ket{S_0} \otimes \ket{A_0}$ of the composite system will evolve into a superposition $\sum_i c_i \ket{S_i} \otimes \ket{A_i}$.  One may say that the universe in which the system and apparatus was in state $\ket{S_0} \otimes \ket{A_0}$ has branched into a set of universes each in the definite state $\ket{S_i} \otimes \ket{A_i}$. Within each universe, the observer sees a different measurement outcome (represented by $A_i$), despite having identical initial conditions, and since any particular observer has no way of knowing which branch they will find themselves in, their measurement outcome is completely unpredictable. We clarify that the universes described so far are merely arbitrary orthogonal decompositions of the universal wavefunction, and it is the task of the decoherence program to find emergent classical realities among these decompositions~\cite{Zurek1982,Joos1985,Zurek2003,joos2003decoherence}, something which we will not do here.

Despite the appeal of dealing with the observer as part of the quantum system, the MWI has been criticized for failing to account for the appearance of Born's rule in the measurement process~\cite{Kent_2010, Vaidman_SEP, vaidman2022many}. Since linearity puts every universe on an equal footing, it would suggest that they are all equally likely, including so-called ``maverick worlds'' in which Born's rule is grossly violated~\cite{DeWitt1973}. This is in contrast to what we actually observe, i.e., that outcome frequencies are proportional to the absolute squares of the coefficients $c_i$. \added{Hartle~\cite{Hartle1968} and Farhi, Goldstone, and Gutmann~\cite{Farhi1989} have argued that this result can be obtained in the limit of infinitely many identically prepared states (see also \cite{Graham1973-GRATMO-15}), where such maverick worlds have vanishing norm. To better understand the fate of these worlds then, we should begin with a minimal model of measurement in which Born's rule can be tested in a mathematically transparent way. }

\section{A model of measurement}
\added{As the probability postulate in it's most basic form is concerned with the relative frequencies of measurement outcomes,} let us imagine that a number of spin-$\tfrac{1}{2}$ particles are prepared in identical states equal to $\ket{\psi} = \alpha \ket{\downarrow} + \beta \ket{\uparrow} = \alpha \ket{1} + \beta \ket{0}$
after which the number of particles in one of the states, $\ket{\downarrow}$ say, is recorded by a particle counter. We may represent such an aparatus by a quantum circuit that counts the number of 1s in a set of identically prepared qubits (i.e., its Hamming weight) and records the result in a binary register consisting of some other previously initialized qubits.  
For the sake of clarity we will construct it from a sequence of unoptimized controlled-increment gates. This gate increments the target register  if the control qubit is set to $1$, and  leaves it unaltered otherwise. The full circuit is 
shown in Fig. \ref{fig:circuitcounter}.

\begin{figure}[h]
    \centering
    \includegraphics[width=8cm]{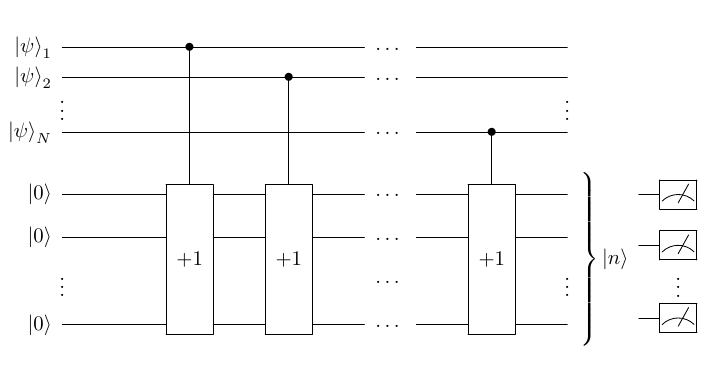}
    \caption{\added{A minimal realization of an ideal measurement apparatus in terms of a quantum circuit that records the number of 1s in its first $N$ inputs, which represent identical copies of the state $\ket{\psi}$. The register $\ket{n}$ represents the apparatus memory. Each $\fbox{+1}$ unitary represents a controlled-increment operation that adds $1$ to $n$ if the control register is set to $\ket{1}$. Before observation, the system remains in a superposition of states with $n$ between $0$ and $N$, with maximum amplitude near $n \approx N |\alpha|^2$.}}
    \label{fig:circuitcounter}
\end{figure}

To be sure, this circuit does not constitute a justification for Born's rule -- its purpose here is only to faithfully monitor the state-vector of the system throughout the operation of the detector: Prior to the first particle being recorded, our circuit may be considered to be in the state $\ket{\Psi_0} = \left(\alpha\ket {1} + \beta \ket {0}\right)^{\otimes N}  \ket{ 0}$, with the rightmost $\ket{0}$ representing the state of the counter. 
After $N$ particles, the system will be in the state

\begin{align}
\ket{ \Psi_N } &=  \sum_{n=1}^N \alpha^{n} \beta^{N-n} \sum_{h} \ket{ s_{n,h}}  \ket{ n},
\label{eqn:PsiN0}
\end{align}
where $\{ \ket{  s_{n,h}} \}$ is the set of subsequent $N$-particle states corresponding to the ${N\choose n}$ initial states which contain $n$ downward spins. 
Making use of the fact that the states $\ket{ s_{n,h}}$ are orthogonal, one finds,
\begin{align}
\label{eq:nupPsi}
\left\vert \left\langle n | \Psi_N \right\rangle \right\vert &=  \sqrt{ |\alpha |^{2n} |\beta|^{2(N-n)} {N \choose n}}.
\end{align}

The  term $|\alpha |^{2n} |\beta|^{2(N-n)} {N \choose n}$ may be recognized as the binomial distribution of the number of successes in a sequence of $N$ independent Bernoulli trials, each with success probability $p = |\alpha|^2$. This function has relative width $\frac{\sigma}{N} = \sqrt{\frac{p\, (1-p)}{N}}$ and mean $\mu = N |\alpha|^2$, so that after a large number of spins have been recorded, the wavefunction of the system will be in a superposition of states that are narrowly peaked around the expected relative frequency $|\alpha|^2$, decaying rapidly to zero in amplitude elsewhere. \added{While this result is consistent with Born's rule, since the most likely outcome is to observe a relative frequency near $|\alpha|^2$, it does not consistute a proof, as it relies on an existing probablistic interpretation of $\left\vert \braket{ n | \Psi_N } \right\vert^2$. On the other hand, this internal consistency is a hint that Born's rule is deeply connected to the structure of the wavefunction itself.}

\added{In order to illustrate the behaviour of our measurement model in a physically concrete yet mathematically transparent form,  we implemented a version of this circuit for $N=4$ (see Fig.~\ref{fig:circuitcounter2} below)} 
on the 133-qubit IBM Heron processor \texttt{ibm\_torino} with 4 qubits  prepared in the state \(\frac{1}{\sqrt{2}} \left( \ket{0} + \ket{1}\right)\). This circuit was chosen for its simplicity, as the device's coherence budget did not allow for counting beyond $N=4$. \added{Additionally, post-selection was performed to mitigate single-qubit errors, whereby shots with erroneous checksums were discarded (a total of $8020$ out of $30000$). A histogram of measurement readouts of the register $\ket{n}$ is shown in Fig.~\ref{fig:histogram}.}

\begin{figure}[H]
    \centering
    \includegraphics{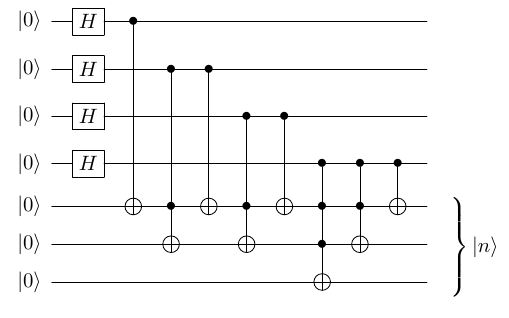}
    \caption{\added{Minimal test of ideal measurement apparatus for $N=4$ qubits (see Fig. \ref{fig:circuitcounter}) executed on the IBM Heron processor \texttt{ibm\_torino}. Circuit counts the number of qubits in the $\ket{1}$ state among $4$ copies of the state $\ket{\psi} = U_H \ket{0}$ and stores the result in the 3-qubit binary register $\ket{n}$. A  distribution of observed values of $n$ is shown in Fig. \ref{fig:histogram}.
    }}
    \label{fig:circuitcounter2}
\end{figure}

\begin{figure}[H]
    \centering
    \includegraphics[width=0.5\textwidth]{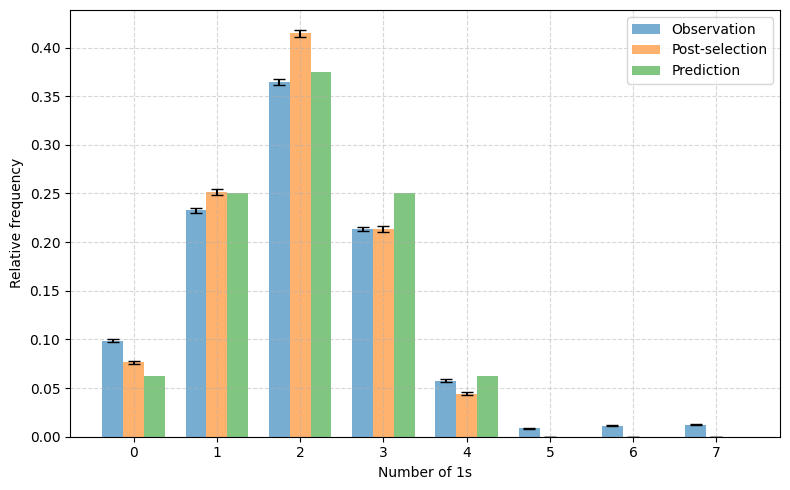}
    \caption{\added{Distribution of measurement frequencies from 30k shots of the 4-qubit counting circuit shown in Fig.~\ref{fig:circuitcounter2} executed on the 133-qubit IBM Heron processor \texttt{ibm\_torino}. Each bin contains: raw frequencies (left); post-selected frequencies after discarding shots that failed single-qubit checksum (center); theoretical binomial distribution (right). Error bars on observed data represent one standard deviation from finite-shot statistical uncertainty. Relative freqency of 1s supports a binomial distribution with $N=4$ and $p=0.5$ with deviations consistent with circuit noise and decoherence. 
    }}
    \label{fig:histogram}
\end{figure}

\added{As can be seen from the histogram in Fig.~\ref{fig:histogram}, the observed frequency of qubits in the $\ket{1}$ state is approximately symmetrically distributed near $n=2$, and support a binomial  distribution with $N=4$ and $p=\tfrac{1}{2}$ (compare with Equation~\eqref{eq:nupPsi}). While the measurements were largely limited by decoherence and circuit noise, the results are consistent with rule Born's applied to the state \(\frac{1}{\sqrt{2}} \left( \ket{0} + \ket{1}\right)\).   Nevertheless, since the circuit is reversible (up to decoherence effects), the system must remain in a superposition of different counter states $\ket{n}$ prior to its final observation.}

This experiment therefore resembles a sort of robotic ``Wigners's friend''~\cite{Everett1956,Wigner1961,Deutsch1985}, in which an observer (the particle counter) is monitored by a second observer (the experimenter) -- a scenario which cleanly illustrates the sort of contradictions inherent to the Copenhagen interpretation~\cite{Frauchiger_2018}. From the point of view of the counter, the system contains a definite number of \(\ket{\downarrow}\) states, whereas from the point of view of the experimenter, the system as a whole appears to be in a superposition of states in which every possible value of $n$ has been registered. As  $N$ is increased, the majority of these states correspond to frequencies that deviate significantly from Born's rule, i.e., $n$ is far from $N|\alpha|^2$. Such outcomes are very unlikely, however, and most observations will be centered around the expected value $\braket{n} = N|\alpha|^2$. Of course, this fact just pushes the observation process one level further, from the particle counter, to the experimenter.

At this point the special role
of an observer seems indispensable for obtaining Born's rule, but this can only
be true if they are to be endowed with some non-unitary quality~\cite{Everett1956}. While this possibility has yet to be ruled out experimentally, recent demonstrations of quantum superposition on both mesoscopic and macroscopic~\cite{OConnell2010, Fein2019} scales make this argument hard to accept. An alternative conclusion, originally put forth by Hartle~\cite{Hartle1968}, and Farhi, Goldstone, and Gutmann~\cite{Farhi1989}, is that maverick worlds are nonphysical, since they are assumed to have vanishing norm (and not simply vanishing likelihood). These works showed that a wavefunction containing identically-prepared states becomes an exact eigenstate of the relative frequency operator as $N$ is taken to infinity, as the component of the wave-function containing histories that deviate from Born's rule vanishes. This situation is essentially the infinite limit of the circuit discussed above. Since a well-formed Hilbert space must contain only a single vector of zero norm (i.e., the null vector), it seems justifiable to treat such histories as artefacts of representation rather than physical possibilities. In the next section we will present this argument in detail,  before moving on to its generalization to arbitrary product states.

\section{Special case: infinite identical states}

Careful attention to Equation~\eqref{eqn:PsiN0} shows that the counter has so far served only to organize the superposition into states with definite numbers of downward spins. We may therefore focus our attention on the initial state vector of the particles alone. First, we rewrite the state in terms of the computational basis, i.e. 
\begin{equation}
    \label{eqn:PsiN}
\ket{\Psi_N} = \sum_{x_1 x_2\dots x_N} c_{x_1 x_2 \dots x_N   } \ket{x_1}\ket{x_2}\dots \ket{x_N},
\end{equation}
where $x_i \in \{0,1\}$ and $c_{x_1 x_2 \dots x_N} =  \alpha^{n} \beta^{N-n}$ with $n = \sum x_i$. Each state of the form $\ket{x_1}\ket{x_2}\dots \ket{x_N}$ is therefore an eigenstate of the operator $\hat n = \sum_i \hat x_i$, meaning that we can associate a definite counter value $n$ to each set of measurement outcomes $(x_1, x_2, \dots, x_N)$. We can show that, as $N\to\infty$, the amplitude of the wave-function (\ref{eqn:PsiN}) becomes negligible away from $n = N |\alpha|^2$ (and in fact vanishes). However, in this limit, the vector space $\mathcal{H} = \mathbb{C}^{2^N}$ spanned by the basis vectors $\ket{x_1}\ket{x_2}\dots$ becomes non-separable, so that we have to be careful to maintain a well-behaved inner product structure~\cite{vonNeumann1939}.  Let us define a maverick world to be one in which the observed relative frequency $f = \frac{1}{N} \sum x_i$ differs from the expected frequency $\mathrm{E}[f] = |\alpha|^2$ by some finite positive error $\epsilon$, i.e., 

\begin{equation}
\left| f - |\alpha|^2\right| > \epsilon.\label{eqn:xbarepsilon}
\end{equation}
(Here and elsewhere expectation values will always mean those computed according to Born's rule). We can then decompose $\ket{\Psi_N}$ into the projection $\ket{\Psi_N}_\mathrm{M(averick)}$ containing all maverick worlds, as well as the projection $\ket{\Psi_N}_\mathrm{B(orn)}$ containing all regular, or Born worlds, so that $\ket{\Psi_N} = \ket{\Psi_N}_\mathrm{B} + \ket{\Psi_N}_\mathrm{M}$. To find $\left\lVert\ket{\Psi_N}_\mathrm{M}\right\rVert$, we would need to evaluate 
\begin{equation}
\left\lVert\ket{\Psi_N}_\mathrm{M}\right\rVert^2 = \sum_{\left| f - |\alpha|^2\right|\, >\, \epsilon } |c_{x_1 x_2 \dots x_N   }|^2.
\end{equation}
We can however place an upper bound on this quantity. Because $n$ is the sum of independent random variables taking the values $\{0,1\}$ we use Hoeffding's inequality~\cite{Hoeffding1963} to find
\begin{equation}
\left\lVert\ket{\Psi_N}_\mathrm{M}\right\rVert^2 \le 2 \, e^{-2 \epsilon^2 N}.
\end{equation}
It follows that $\lim_{N\to\infty} \left\lVert\ket{\Psi_N}_\mathrm{M}\right\rVert^2 = 0$ for any finite $\epsilon$, and therefore that $\ket{\Psi_\infty}$ differs from $\ket{\Psi_\infty}_\mathrm{B}$ by a quantity of zero norm. Such spurious vectors are not proper elements of the Hilbert space, and must be removed in order to maintain a positive definite
inner product~\cite{vonNeumann1939}. That such states are orthogonal to (and therefore decoupled from) physical states can be seen directly from the Cauchy--Schwarz inequality,
\begin{equation}
\left| \braket{a|b} \right| \le \braket{a|a} \braket{b|b} ,\quad \forall \ket{a}, \ket{b} \in \mathcal{H},
\end{equation}
since if $\braket{a|a}=0$ then the overlap of $\ket{a}$ with any other vector is zero.
 It is therefore necessary that we identify those elements of $\mathcal{H}$ that differ by elements of $\mathcal{H}_0$, the subspace of zero norm states. Our Hilbert space is then the quotient space denoted by $\mathcal{H}/\mathcal{H}_0$, and we may consider $\ket{\Psi_\infty}$ and $\ket{\Psi_\infty}_B$ to represent the same physical state, i.e., \mbox{$\ket{\Psi_\infty} =  \ket{\Psi_\infty}_B$}. As for the fate of the spurious sequences represented by \(\ket{\Psi_\infty}_M\), it  may be more appropriate to think of them as finite-\(N\) artefacts. We conclude that in the $N\to\infty$ limit the universal wavefunction contains only those sequences that agree with Born's rule.


An immediate obstacle to the frequentist program is that one does not perform measurements on systems containing infinitely many identically prepared particles, and this was already pointed out by Farhi,
Goldstone, and Gutmann ~\cite{Farhi1989} (see also~\cite{Caves2005}). Likewise, for finite \(N\), all frequencies of events have non-zero amplitudes, and consequently non-zero probabilities of occurring, in which case the question of whether a sequence obeys Born's rule seems ill-defined.
Everett and others have ultimately argued that the only way to produce the correct Born probabilities for finite measurements is to assign a probability measure on the universal Hilbert space, so that maverick worlds are never observed simply because one is very unlikely to find oneself in such a world. However, this apparent resolution is circular, and is at odds with the notion of the wavefunction as a complete description of the system~\cite{Kent_2010, Vaidman_SEP, vaidman2022many}. If quantum mechanics is to be self consistent, then the structure of the wavefunction itself must account for the appearance of Born's rule. In the next section, we will see that we can in fact generalize the above derivation to measurements on finite, arbitrarily prepared subsystems, as long as those subsystems are embedded in an infinite environment.

\section{General result: arbitrary product states}

The failure of the frequentist approach stems from its inability to account for the probability of encountering certain finite sequences of outcomes without invoking Born's rule at some point. A way out of this problem is to realize that any experiment performed on a finite multi-particle state such as $\ket{\psi} \ket{\psi}...\ket{\psi}$ must take place inside a Hilbert space containing also all the particles in the environment. Aguirre and Tegmark~\cite{Aguirre2011} have argued that in an infinite, statistically uniform cosmological model the laboratory state $\ket{\psi}$ must be replicated infinitely many times throughout the universe, thereby realizing the  ``fictitious'' infinite ensemble needed to derive Born's rule. (More specifically, the authors of impose the stronger condition that both system plus experimenter be replicated infinitely many times, although this does not seem necessary.) However, this argument rests on some knowledge of the distribution of states that make up the universal wavefunction, and does not account for states that come arbitrarily close to, but never equal $\ket{\psi}$. As we will show, the replica condition of ~\cite{Aguirre2011} is actually unnecessary, so that we need only consider states of the form
\begin{equation}
\ket{\Omega} = ...\ket{\varphi_{-2}}\ket{\varphi_{-1}}\,\Big(\ket{\psi}\ket{\psi}...\ket{\psi}\Big)\,\ket{\varphi_{1}}\ket{\varphi_{2}}...,
\end{equation}
where $\{\ket{\varphi_i}\}$ now represent arbitrary environmental component states. Since there is no real distinction between the system and the environment, it is convenient to treat all degrees of freedom on an equal footing by absorbing the multi-particle state $\ket{\psi}\ket{\psi} ...\ket{\psi}$ into the environmental degrees of freedom, letting
\begin{equation}
\ket{\Omega_N} = \ket{\phi_1}\ket{\phi_2}\dots \ket{\phi_N}.
\label{eqn:Omega_N}
\end{equation}
Of course, the reader may still imagine that some finite subset of components of  $\ket{\Omega_N}$ are identically prepared, but this is not strictly necessary. It is important to note, however,  that the form of $\ket{\Omega_N}$ assumes that the universal wavefunction is factorizable into non-entangled subsystems (at least in some basis) and excludes strongly entangled states. Whether Born's rule arises in such cases remains an open question, and we leave its investigation for future work.

What we need, then, is a way to verify whether a measurement on an arbitrary state such as $\ket{\Omega_N}$ obeys Born's rule, i.e., an analogue of Equation~\eqref{eqn:xbarepsilon} for arbitrary sequences. In terms of the computational basis, this state may again be written as
\begin{equation}
\ket{\Omega_N} = \sum_{x_1 x_2\dots x_N} c_{x_1 x_2 \dots x_N   } \ket{x_1}\ket{x_2}\dots \ket{x_N}.
\end{equation}
As we have argued, there can be no strict condition placed on the outcome of any finite subset of measurement outcomes $(x_i)$ (provided that $p_i$ is neither $0$ or $1$). \added{Instead, we will take an information theoretic approach, and use the \textit{self-information rate}~\cite{Shannon1948,Cover2006} of the sequence $(x_i)$ as our metric. This value, also known as the \textit{surprisal rate}, is given by
\begin{equation}
-\frac{1}{N}\ln p(x_1,x_2,\dots,x_N) 
\end{equation}
and quantifies how unexpected a sequence $(x_i)$ is given a joint probability distribution $p(x_1, x_2,\dots, x_N)$. For a single variable $x$ described by a probability distribution $p(x)$ we see that this value ranges from $0$ when the outcome is guaranteed ($p(x) = 1$), to $\infty$ when the outcome is forbidden ($p(x) = 0$). } More generally, it allows us to  define a \textit{typical sequence} to be a sequence $(x_i)$ such that
\begin{equation}
\label{eq:deftypical}
\left| -\frac {1}{N}\ln p(x_1,x_2,\dots,x_N) - H \right|\le \epsilon,
\end{equation}
for some  value $\epsilon > 0$, where
\begin{equation}
H= \frac{1}{N} \mathrm{E}\left[-\ln p(x_1,x_2,\dots,x_N)\right],
\end{equation}
is the Shannon entropy rate\footnote{Not to be confused with the von Neumann entropy, which is zero in this case.} of the distribution $p(x_1,x_2,\dots,x_N)$~\cite{Shannon1948}. Typical sequences are representative sequences of a probability distribution, an idea made precise  by the asymptotic equipartition property (AEP)~\cite{Shannon1948,Cover2006}, which states that
\begin{equation}
\lim_{N \to \infty} \mathrm{Pr}\left[ \left| -\frac {1}{N}\ln p(x_1,\dots,x_N) - H(x) \right| > \epsilon\right] = 0
\label{eqn:AEP}
\end{equation}
That is, as $N$ becomes large, the self-information rate of a sequence chosen at random from the distribution $p(x_1,x_2,\dots,x_N)$ tends towards $H$, which follows directly from the weak law of large numbers applied to the quantity $ -\ln p(x_1,x_2,\dots,x_N)$. We can therefore partition the set of all possible sequences $(x_i)$ into two sets: a typical set, in which every element has probability $p(x_1,x_2,\dots,x_N) \approx e^{-N H}$, and a non-typical set, containing all other sequences. The AEP tells us that, in the $N \to \infty$ limit, the probability of randomly selecting an element that belongs to the non-typical set is zero. 

Equation~\eqref{eq:deftypical} therefore provides a precise operational definition of Born's rule in the case of arbitrarily prepared states: Given a state $\ket{\Omega_N}$ and the  associated quantity $p(x_1,x_2,\dots,x_N) = \left| \braket{x_1,x_2,\dots,x_N | \Omega_N  }\right|^2$, we say that a sequence of experimental outcomes $\{x_1, x_2,\dots,x_N\}$ obeys Born's rule if it satisfies Equation~\eqref{eq:deftypical} for some chosen threshold $\epsilon$. Importantly, we introduce the AEP and the probability measure it describes \emph{not as an assumption}, but as a means to \emph{distill} the statement that we aim to prove.

To see that this definition agrees with our intuition, note that if we have $N'$ repetitions of the state $\sum_k a_k \ket{k}$ then the inequality in Equation~\eqref{eq:deftypical} will contain the term
\begin{equation}
\sum_i \ln \left|a_{x_i}\right|^2 - N' \sum_{k} \left|a_{k}\right|^2 \ln \left|a_{k}\right|^2,
\end{equation}
which attains a minimum when $\ln\left|a_k\right|^2$ occurs roughly $N' \left|a_k\right|^2$ times in the first sum, or equivalently, when there are roughly $N' \left|a_k\right|^2$ particles in the state $\ket{k}$.

Now that we have a robust definition of Born's rule for arbitrarily prepared product states, we can proceed to show that only sequences that obey this rule persist in the $N\to\infty$ limit, that is, that the state \mbox{$\ket{\Omega_N} = \ket{\Omega_N}_B + \ket{\Omega_N}_M$} contains no non-typical sequences in the $N\to\infty$ limit. Fortuitously, the squared norm $\left\lVert\ket{\Omega_N}_\mathrm{M}\right\rVert^2$ is precisely equal to the quantity $\mathrm{Pr}\left[ \left| -\tfrac {1}{N}\ln p(x_1,x_2,\dots,x_N) - H \right| > \epsilon\right]$ already considered. Therefore, by Equation~\eqref{eqn:AEP}, we may conclude that $\lim_{N\to\infty} \left\lVert\ket{\Omega_N}_\mathrm{M}\right\rVert^2 = 0$ and that $\ket{\Omega_\infty} = \ket{\Omega_\infty}_B$ in general. This result is somewhat subtle: the condition we used to define Born sequences turns out to also guarantee that only such sequences survive in the large-$N$ limit. The structure of the universal wavefunction itself therefore ensures the emergence of Born’s rule, without requiring any external probability measure.

Here we pause to scrutinize this result. Although it may seem that we appealed to a probabilistic argument, we stress that the AEP and the weak law of large numbers from which it is derived are purely algebraic statements, and Equation~\eqref{eqn:AEP} may be applied directly to the squared amplitudes $ \left| c_{x_1 x_2 \dots x_N}\right|^2$ without invoking the probability postulate. In other words, we do not discard maverick worlds because they are improbable, but rather because they are unphysical (not elements of the Hilbert space proper). Since our argument hinges on the fact that the AEP is a non-probabilistic statement, it is worth demonstrating that explicitly. We begin by recalling an important inequality (see appendix):

\begin{theorem}[Chebyshev's Inequality] \label{thm:cheb} Given an arbitrary state $\ket{\Psi}$ and a Hermitian operator $\hat Y$, let $\hat P_{|Y-\mu_Y| > \epsilon}$ be the projection operator that preserves states for which $\left|Y-\bra{\Psi} \hat Y \ket{\Psi}\right| > \epsilon$ for some $\epsilon > 0$. Then
    \begin{equation}
    \left\lVert \hat P_{|Y-\mu_Y| >\epsilon }\ket{\Psi}\right\rVert ^2 \le \frac{\mathrm{Var}(Y)}{\epsilon^2},
    \end{equation}
    where $\mathrm{Var}(Y) = \bra{\Psi} (\hat Y- \bra{\Psi} \hat Y \ket{\Psi})^2 \ket{\Psi}$.
    \end{theorem}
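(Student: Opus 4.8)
The plan is to transcribe the elementary proof of the classical Chebyshev inequality into operator language, with the classical expectation $\mathrm{E}[\cdot]$ replaced by the quantum expectation value $\bra{\Psi}\,\cdot\,\ket{\Psi}$ and the indicator of the event $\{|Y-\mu_Y|>\epsilon\}$ replaced by the spectral projector $\hat P\equiv\hat P_{|Y-\mu_Y|>\epsilon}$. Write $\mu_Y=\bra{\Psi}\hat Y\ket{\Psi}$ and introduce the centered Hermitian operator $\hat Z=\hat Y-\mu_Y\hat I$, so that $\mathrm{Var}(Y)=\bra{\Psi}\hat Z^{2}\ket{\Psi}$. Since $\hat P$ is by construction assembled from the spectral resolution of $\hat Y$, it commutes with $\hat Y$ and with every function of $\hat Y$ --- in particular with $\hat Z$ and $\hat Z^{2}$ --- and satisfies $\hat P^{\dagger}=\hat P=\hat P^{2}$.

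The crux is the operator inequality
\begin{equation}
\hat Z^{2} \ge \epsilon^{2}\,\hat P ,
\end{equation}
i.e.\ that $\hat Z^{2}-\epsilon^{2}\hat P$ is positive semidefinite. I would establish this by writing $\hat Z^{2}-\epsilon^{2}\hat P=\hat Z^{2}(\hat I-\hat P)+(\hat Z^{2}-\epsilon^{2}\hat I)\hat P$. The first term equals $\big[(\hat I-\hat P)\hat Z\big]^{\dagger}\big[(\hat I-\hat P)\hat Z\big]$, using $\hat Z^{\dagger}=\hat Z$, the idempotence of $\hat I-\hat P$, and $[\hat P,\hat Z]=0$, and is therefore positive; the second term is positive because on the range of $\hat P$ the spectrum of $\hat Z^{2}$ is bounded below by $\epsilon^{2}$, while $\hat P$ annihilates the orthogonal complement. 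Taking the expectation value of this inequality in $\ket{\Psi}$ and using $\bra{\Psi}\hat P\ket{\Psi}=\bra{\Psi}\hat P^{\dagger}\hat P\ket{\Psi}=\left\lVert\hat P\ket{\Psi}\right\rVert^{2}$ then yields $\mathrm{Var}(Y)\ge\epsilon^{2}\left\lVert\hat P\ket{\Psi}\right\rVert^{2}$, which is the assertion.

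I expect the only real obstacle to be functional-analytic bookkeeping rather than any inequality. For a general Hermitian $\hat Y$ one must invoke the spectral theorem, define $\hat P=\int_{|y-\mu_Y|>\epsilon}\!dE(y)$, and assume $\ket{\Psi}$ lies in the domain of $\hat Y$ so that $\mu_Y$ and $\mathrm{Var}(Y)$ are finite (otherwise the bound is vacuous). In the finite-dimensional qubit registers used in the main text all of this is automatic, and the argument collapses to the classical chain $\mathrm{Var}(Y)=\bra{\Psi}\hat Z^{2}\ket{\Psi}\ge\bra{\Psi}\hat Z^{2}\hat P\ket{\Psi}\ge\epsilon^{2}\bra{\Psi}\hat P\ket{\Psi}$, evaluated term by term in an eigenbasis of $\hat Y$; indeed the whole statement is nothing but Markov's inequality applied to the positive operator $\hat Z^{2}$, noting that $\{|Y-\mu_Y|>\epsilon\}$ and $\{Z^{2}>\epsilon^{2}\}$ define the same projector.
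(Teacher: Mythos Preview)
Your proof is correct and follows essentially the same route as the paper: the paper first isolates Markov's inequality as a lemma (proved in an eigenbasis of the non-negative operator $\hat f$) and then obtains Chebyshev by setting $\hat f=(\hat Y-\mu_Y)^2$ and $a=\epsilon^2$, which is exactly the reduction you identify in your last sentence. Your operator-inequality presentation $\hat Z^2\ge\epsilon^2\hat P$ is just a basis-free repackaging of the paper's eigenbasis chain $\bra{\Psi}\hat f\ket{\Psi}\ge\bra{\Psi}\hat P_{f>a}\hat f\hat P_{f>a}\ket{\Psi}\ge a\,\bra{\Psi}\hat P_{f>a}\ket{\Psi}$.
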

Phrased in this form, Chebyshev's inequality is independent of any probabilistic interpretation, and may be used to derive the AEP by applying it to $\ket{\Omega_N}$, letting $Y(x_1,x_2,\dots,x_N) = -\frac{1}{N} \ln \left| c_{x_1  \dots x_N   }\right|^2$. 
\begin{widetext}
Then
\begin{align}
\left\lVert \hat P_{\left|- \frac{1}{N} \ln \left|c_{x_1  \dots x_N   }\right|^2 - H(x)\right| > \epsilon }\ket{\Omega_N}\right\rVert ^2 &\le \frac{\mathrm{Var}( \ln \left|c_{x_1 \dots x_N   }\right|^2)}{N^2 \epsilon^2}.
\end{align}
Since $\ket{\Omega_N} = \ket{\phi_1}\ket{\phi_2}\dots \ket{\phi_N}$ the quantity $\left|c_{x_1 \dots x_N   }\right|^2$ can be factorized into the form $|c_1(x_1)|^2\dots |c_N(x_N)|^2$ so that
\begin{align}
\left\lVert \hat P_{\left|- \frac{1}{N} \ln \left|c_{x_1  \dots x_N   }\right|^2 - H(x)\right| > \epsilon }\ket{\Omega_N}\right\rVert ^2 &\le \frac{\sum_i \mathrm{Var}( \ln |c_i(x_i)|^2)}{N^2 \epsilon^2}.
\end{align}
\end{widetext}
Provided that $\mathrm{Var}(\ln|c_i(x_i)|^2) < M$ for all $i$ (which may be ensured by standard regularization techniques) we may write
\begin{align}
\left\lVert \hat P_{\left|- \frac{1}{N} \ln \left|c_{x_1  \dots x_N   }\right|^2 - H(x)\right| > \epsilon }\ket{\Omega_N}\right\rVert ^2 &\le \frac{M}{N \epsilon^2}.
\label{eqn:AEP2}
\end{align}
This equation places an upper bound on $\left\lVert\ket{\Omega_N}_\mathrm{M}\right\rVert^2$ without any reference to probability, and to allows us conclude that $\lim_{N\to\infty} \left\lVert\ket{\Omega_N}_\mathrm{M}\right\rVert^2 = 0$ and that $\ket{\Omega_\infty} = \ket{\Omega_\infty}_B$ as claimed.

It may seem objectionable that we are discarding maverick worlds on the basis that $\ket{\Omega_\infty}_M=0$ when the normalization of the wavefunction requires that \textit{every} sequence has zero amplitude (in the infinite limit). However, this requirement is actually unjustified, since we cannot subject the universal wavefunction to repeated external measurement, and without any probablistic interpretation there is no reason to demand that $\left\Vert\ket{\Omega_\infty}\right\Vert^2 = 1$. \added{Without this restriction, we may view $\ket{\Omega_\infty}$ as a formal sum over states containing almost entirely typical sequences (along with an exponentially suppressed number of non-typical states with greater individual amplitudes).}

The fact that we observe Born's rule therefore stems from a rather surprising place: From a practical point of view, choosing a sequence $(x_1, x_2, \dots)$ at random from the distribution $p(x_1,x_2,\dots)$ is \textit{indistinguishable} from choosing among the set of typical sequences. Since $\ket{\Omega_\infty}$ consists almost entirely of typical sequences , our observations must appear to be governed by the distribution $p(x_1,x_2,\dots)$ as obtained from Born's rule. \added{This establishes that the probability postulate is guaranteed by unitary evolution provided that the universal wavefunction exhibits the structure required by the AEP.}

\section{Application to quantum benchmarking}

The self-information rate \( -\frac{1}{N} \log p(x_1, x_2, \dots) \) is not only key to understanding the emergence of Born’s rule in infinite systems, but is also a potentially useful metric for circuit fidelity in quantum computing -- provided the true entropy rate \( H(p) \) of the target distribution can be estimated. While metrics such as cross-entropy benchmarking (XEB)~\cite{GoogleSupremacy2019} are already employed, the difference between the self-information rate and the entropy rate effectively \emph{defines} the fidelity of a given data sequence to the target distribution, and is therefore \added{a potential alternative to other metrics, even at finite $N$.}
\added{As a means to experimentally validate this, as well as to gain a better understanding of the behaviour of self-information for ``large'' quantum systems}, we prepared an ensemble of 120 qubits in a pseudo-randomly distributed (i.e., reproducible) product state on the IBM Eagle processor \texttt{ibm\_sherbrooke}, and recorded  measurements over 4096 shots (See \mbox{\url{https://github.com/lbrits/born}}).  As can be seen from the histogram in Fig.~\ref{fig:entropycompare}, the self-information rates for each of these measurements (left) are narrowly clustered around the expected value for that state \added{(as predicted by the AEP)}. \added{In order to compare the self-information rates of measurements performed on the prepared states to that of sequences obtained by pure chance (or those which are saturated by noise), } we also computed the  rates of 4096 pseudo-random bitstreams relative to the same states (right). \added{As can be seen, the overlap between the two distributions is negligible, demonstrating the value of self-information rate for distinguishing the high-fidelity measurement outcomes from noise-dominated ones. }

\begin{figure}[H]
    \centering
    \includegraphics[width=0.5\textwidth]{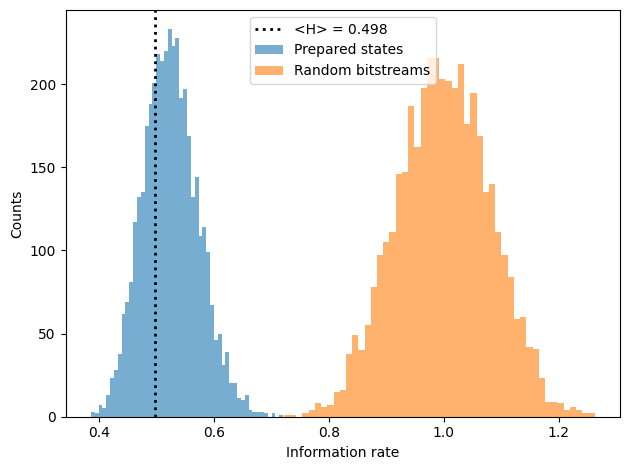}
    \caption{Histogram of observed self-information rates for 4096 measurements of 120 pseudo-randomly prepared qubits on the IBM Eagle processor \texttt{ibm\_sherbrooke} showing clustering around predicted entropy rate (left).  Self-information rates for 4096 uniformly random bitstreams of the same length, showing deviation from Born rule consistency (right).
    }
    \label{fig:entropycompare}

\end{figure}
Note that while we used a large number of shots to illustrate this point, self-information rate is calculated per-shot, and therefore is applicable to single-shot measurements. \added{This is in contrast to XEB, which requires many shots -- typically on the order of $10^6$ -- in order to estimate the empirical distribution. Self-information rate is therefore a potential single-shot, or few-shot alternative to XEB, at least in preparations that are compatible with the AEP. }

\added{It is also illustrative to give a general estimate of the entropy rate for randomly prepared states such as the one we have just considered. }
Given an ensemble of product states on $N$ qubits sampled independently from the Bloch sphere, i.e., the Haar measure on $SU(2)$, the probability $p$ of measuring the state $\ket{\uparrow}$ is uniformly distributed over $[0,1]$. The expected value $H(p) = -p \ln p - (1 - p) \ln (1 - p)$ can be computed directly:
\[
\mathbb{E}_{p}[H(p)] = \int_0^1 H(p) \, dp  = \frac{1}{2}.
\]
For a large number of qubits, we would expect the single-shot self-information rate to approach $\tfrac{1}{2}$. 
As we can see from Fig.~\ref{fig:entropycompare}, our pseudo-randomly prepared ensemble predicted an entropy rate very close to this ideal value, with $\braket{H} = 0.498$. This calculation can also be extended to the case of random states drawn uniformly from the full $2^N$-dimensional Hilbert space~\cite{Boixo_2018}, however, this is beyond the scope of this paper.

\section{Concluding remarks}
The importance of environmental degrees of freedom in obtaining Born's rule was already recognized by Zurek~\cite{Zurek1982} in the context of environmentally induced decoherence. However, in our result the environment plays a non-dynamical role, serving to define, and then get rid of maverick worlds, a sort of Mach's principle for quantum states.  Therefore, while system and environmental degrees of freedom are \textit{a priori} independent, they must be considered together as parts of a single quantum state. 
Our result shows that if one allows for systems with infinitely many degrees of freedom to exist, then Born's rule arises from the theory automatically. However, we do not impose any \textit{ad hoc} measure on the Hilbert space in order to achieve this result. Instead, we note that in order for the inner product to be positive definite, all vectors of zero norm must be identified with the zero vector. Thus the physical Hilbert space is not $\mathcal{H}$ but $\mathcal{H}/\mathcal{H}_0$, in which $\ket{\Omega_\infty}-\ket{\Omega_\infty}_B$ is identically zero.

It is also worth noting that the argument presented in this paper works strictly in the limit $N\to\infty$, and fails for any finite value of $N$, where maverick states are in the majority. If we were to assume that one is equally likely to find oneself in any of a finite number of histories, then we would not expect to observe Born's rule. Thus the statistical behaviour of the finite system does not approach that of the infinite system continuously. (Fig.~\ref{fig:entropycompare} does not contradict this fact, since for both datasets the number of measurements was fixed at 4096 for comparison purposes.) 

To fully account for Born's rule implicitly assumes that our universe contains infinitely many degrees of freedom (rather than some arbitrarily large number, as is often done). While this does not seem to be an overly objectionable assumption to us, we can turn this reasoning around and view Born's rule instead as an experimental validation of this possibility. \added{That is, the observation of Born's rule supports the idea that the universe contains infinitely many degrees of freedom, and assuming physical bounds on thermodynamic entropy~\cite{Bekenstein1981}, is therefore of infinite extent.}

Finally, while the present analysis applies to infinite product-state preparations, the fate of Born's rule in highly entangled or critical phases remains an open question. \added{Precisely which structural features, such as isotropy or ergodicity, the emergence of Born's rule hinges upon is a subject for future work.}

\section{Acknowledgments}

The author thanks Conor Stokes and Stephanie Curnoe for helpful discussions, and his advisor, James LeBlanc, for valuable guidance.

\bibliographystyle{quantum}
\bibliography{bornquantum}

\onecolumn

\appendix
\pagebreak
\section{Appendix}

\renewcommand{\thetheorem}{A.\arabic{theorem}}
\setcounter{theorem}{0}

The following are standard results rephrased in linear-algebraic terms to emphasize freedom from probabilistic interpretation.

\begin{theorem}[Markov's Inequality] \label{thm:markov}
Given an arbitrary state $\ket{\Psi}$ and a non-negative Hermitian operator $\hat f$, let $\hat P_{f > a}$ and $\hat P_{f \le a}$ be projection operators that separate those states for which $f > a$ from those for which $f \le  a$ for some $a>0$. Then
\begin{equation}
\left\lVert \hat P_{f > a}\ket{\Psi}\right\rVert ^2 \le \frac{1}{a} \bra{\Psi} \hat f\ket{\Psi}.
\end{equation}
\end{theorem}

\textit{Proof.}
Without loss of generality, we may consider a basis that diagonalizes $\hat{f}$, i.e., let 
\begin{equation}
\ket{\Psi} = \sum_{x_1 x_2\dots x_N} \Psi_{x_1 x_2 \dots x_N   } \ket{x_1}\ket{x_2}\dots \ket{x_N},
\end{equation}
such that
\begin{equation}
\hat f  \ket{x_1}\ket{x_2}\dots \ket{x_N} = f(x_1,x_2, \dots, x_N) \ket{x_1}\ket{x_2}\dots \ket{x_N}.
\end{equation}
Then
\begin{align*}
\bra{\Psi} \hat f \ket{\Psi} &= \bra{\Psi}\hat P_{f \le a} \hat f\hat P_{f \le a}\ket{\Psi} + \bra{\Psi} \hat P_{f > a}\hat f\hat P_{f > a} \ket{\Psi},\\
&\ge\bra{\Psi}\hat P_{f > a} \hat f \hat P_{f > a}\ket{\Psi} ,\\
&\ge a \bra{\Psi}\hat P_{f > a} \hat P_{f > a}\ket{\Psi}.
\end{align*}
Since $a > 0$, 
\begin{equation}
\left\lVert \hat P_{f > a}\ket{\Psi}\right\rVert ^2 \le \frac{1}{a} \bra{\Psi} \hat f\ket{\Psi}.
\end{equation}

\begin{corollary}[Chebyshev's Inequality]
Theorem~\ref{thm:cheb} in the main text follows immediately from Markov's inequality taking $\hat f = (\hat Y- \mu_Y)^2$ where $\mu_Y = \bra{\Psi} \hat Y \ket{\Psi}$ and $a = \epsilon^2$. 
\end{corollary}

\end{document}